\documentclass{llncs} 
\usepackage{graphicx} 
\usepackage{latexsym}
\usepackage{url}
\usepackage{xspace}
\usepackage{amssymb}
\usepackage{amsmath}

\newcommand{\UR}[1]{\mathit{UR}(#1)}
\newcommand{\ER}[1]{\mathit{ER}(#1)}

\newcommand{\LCL}{\theta}
\newcommand{\LCU}{\gamma}
\newcommand{\satequiv}{\equiv_{\mathit{sat}}}
\newcommand{\depqbf}{\textsf{DepQBF}\xspace}
\newcommand{\bloqqer}{\textsf{Bloqqer}\xspace}
\newcommand{\minisat}{\textsf{MiniSAT}\xspace}
\newcommand{\picosat}{\textsf{PicoSAT}\xspace}


\begin{document}
\author{Florian Lonsing \and Uwe Egly}
\institute{Vienna University of Technology \\ Institute of Information Systems \\ Knowledge-Based Systems Group
  \\ \url{http://www.kr.tuwien.ac.at/}}
\title{Incremental QBF Solving\thanks{Supported by the Austrian Science Fund (FWF) under
grant S11409-N23. We would like to thank Armin Biere and Paolo
Marin for helpful discussions. This article will appear in the proceedings of
the \emph{20th International Conference on Principles and Practice of Constraint
Programming, LNCS, Springer, 2014.}}}
\maketitle
\begin{abstract}
We consider the problem of incrementally solving a sequence of quantified
Boolean formulae (QBF). Incremental solving aims at using information learned
from one formula in the process of solving the next formulae in the
sequence. Based on a general overview of the problem and related challenges,
we present an approach to incremental QBF solving which is
application-independent and hence applicable to QBF encodings of arbitrary
problems. We implemented this approach in our incremental search-based QBF
solver \depqbf and report on implementation details. Experimental results
illustrate the potential benefits of incremental solving in QBF-based
workflows.
\end{abstract}


\section{Introduction}

The success of SAT technology in practical applications is largely driven by
\emph{incremental solving}.  SAT solvers based on conflict-driven clause
learning (CDCL)~\cite{DBLP:series/faia/SilvaLM09} gather information about a
formula in terms of learned clauses. When solving a sequence of closely related
formulae, it is beneficial to keep clauses learned from one formula in the
course of solving the next formulae in the sequence.

The logic of quantified Boolean formulae (QBF) extends propositional logic by
universal and existential quantification of variables. QBF potentially allows
for more succinct encodings of PSPACE-complete problems than SAT. 
Motivated by the success of incremental SAT solving, we consider the problem
of incrementally solving a sequence of syntactically related QBFs in prenex conjunctive
normal form (PCNF). 
 Building on search-based QBF solving with clause and cube learning
(QCDCL)~\cite{DBLP:journals/jar/CadoliSGG02,DBLP:journals/jair/GiunchigliaNT06,DBLP:conf/tableaux/Letz02,DBLP:conf/sat/LonsingEG13,DBLP:conf/cp/ZhangM02},
we present an approach to incremental QBF solving, which we
implemented in our solver \depqbf.\footnote{DepQBF is free software:
\url{http://lonsing.github.io/depqbf/}} 

Different from many incremental SAT and QBF~\cite{DBLP:conf/date/MarinMLB12}
solvers, \depqbf allows to add clauses to and delete clauses from the input
PCNF in a stack-based way by \emph{push} and \emph{pop} operations. A related
stack-based framework was implemented in the SAT solver
\picosat~\cite{DBLP:journals/jsat/Biere08}. A solver API with \emph{push} and
\emph{pop} increases the usability from the perspective of a user. Moreover, we present an 
optimization based on this stack-based framework which reduces the size of the learned clauses.

Incremental QBF solving was introduced for QBF-based
bounded model checking (BMC) of partial
designs~\cite{DBLP:conf/sat/MarinMB12,DBLP:conf/date/MarinMLB12}. This
approach, like ours, relies on selector variables and assumptions to support the deletion of clauses from the current
input
PCNF~\cite{DBLP:conf/sat/AudemardLS13,DBLP:journals/entcs/EenS03,DBLP:conf/sat/LagniezB13,DBLP:conf/sat/NadelR12}. The
quantifier prefixes of the incrementally solved PCNFs resulting from the BMC
encodings are modified only at the left or right end. In contrast to that, we
consider incremental solving of \emph{arbitrary} sequences of
PCNFs.  
For the soundness it is crucial to
determine which of the learned clauses and cubes can be kept across different
runs of an incremental QBF solver.  
We aim at a
general presentation of incremental QBF solving and illustrate problems
related to clause and cube learning. Our approach is
\emph{application-independent} and applicable to QBF encodings of
\emph{arbitrary} problems.

We report on experiments with constructed benchmarks. In addition to
 experiments with QBF-based conformant planning
using \depqbf~\cite{DBLP:journals/corr/EglyKLP14}, our results illustrate the potential benefits of incremental QBF solving 
 in application domains like
synthesis~\cite{DBLP:conf/vmcai/BloemKS14,DBLP:conf/sat/StaberB07}, formal
verification~\cite{jsat:benedetti08},
testing~\cite{DBLP:conf/date/HillebrechtKEWB13,DBLP:journals/tc/MangassarianVB10,DBLP:conf/iscas/SulflowFD10},
planning~\cite{DBLP:conf/ecai/CashmoreFG12}, and 
model enumeration~\cite{DBLP:conf/atva/BeckerELM12}, for example.


\section{Preliminaries} \label{sec_prelims}

We introduce terminology related to QBF and search-based QBF solving necessary to present a general
view on incremental solving.

For a propositional variable $x$, $l := x$ or $l := \neg x$ is a
\emph{literal}, where $v(l) = x$ denotes the variable of $l$.  A \emph{clause}
(\emph{cube}) is a disjunction (conjunction) of literals.  A \emph{constraint}
is a clause or a cube. The empty constraint $\emptyset$ does not contain any
literals. A clause (cube) $C$ is \emph{tautological} (\emph{contradictory}) if
$x \in C$ and $\neg x \in C$.

A propositional formula is in  \emph{conjunctive (disjunctive) normal form} if it consists of
a conjunction (disjunction) of clauses (cubes), called CNF (DNF). 
For simplicity, we regard
CNFs and DNFs as sets of clauses and cubes, respectively.

A quantified Boolean formula (QBF) $\psi := \hat{Q}.\,\phi$ is in  \emph{prenex CNF (PCNF)}
if it consists of a quantifier-free CNF $\phi$ and a  \emph{quantifier prefix}
$\hat{Q}$ with $\hat{Q} := Q_1B_1 \ldots Q_nB_n$ where $Q_i \in
\{\forall,\exists\}$ are  \emph{quantifiers} and $B_i$ are  \emph{blocks} (i.e.~sets) of
 variables such that $B_i \not = \emptyset$ and $B_i \cap B_j =
\emptyset$ for $i \not = j$, and $Q_i \not = Q_{i+1}$.

The blocks in the quantifier prefix are  \emph{linearly ordered} such that
$B_i < B_{j}$ if $i < j$. The linear ordering is extended to variables and literals:
$x_i < x_j$ if $x_i \in B_i$, $x_j \in B_j$ and $B_i < B_j$, and $l < l'$ if
$v(l) < v(l')$ for literals $l$ and $l'$. 

We consider only \emph{closed} PCNFs, where every variable which occurs in the
CNF is quantified in the prefix, and vice versa. 

A variable $x \in B_i$ is  \emph{universal}, written as $q(x) = \forall$, if $Q_i =
\forall$ and  \emph{existential}, written as $q(x) = \exists$, if $Q_i = \exists$. A
literal $l$ is universal if $q(v(l)) = \forall$ and existential if $q(v(l)) =
\exists$, written as $q(l) := \forall$ and $q(l) := \exists$, respectively.

An  \emph{assignment} is a mapping from variables to the truth values \emph{true}
 and \emph{false}. An assignment $A$ is represented as a set of
literals $A := \{l_1,\ldots,l_k\}$ such that, for $l_i \in A$, if $v(l_i)$ is
assigned to false (true) then $l_i = \neg v(l_i)$ ($l_i = v(l_i)$).

A \emph{PCNF $\psi$ under an assignment $A$} is denoted by $\psi[A]$ and is obtained
from $\psi$ as follows: 
for $l_i \in A$, if $l_i = v(l_i)$ ($l_i = \neg
v(l_i)$) then all occurrences of $v(l_i)$ in $\psi$ are replaced by the
syntactic truth constant $\top$ ($\bot$), respectively. All constants are
eliminated from $\psi[A]$ by the usual simplifications of Boolean algebra and
superfluous quantifiers and blocks are deleted from the quantifier prefix of
$\psi[A]$. Given a cube $C$ and a PCNF $\psi$, $\psi[C] := \psi[A]$ is the
formula obtained from $\psi$ under the assignment $A := \{l \mid l \in C\}$
defined by the literals in $C$.

The  \emph{semantics} of closed PCNFs is defined recursively. 
The QBF $\top$ is satisfiable and the QBF $\bot$ is
unsatisfiable. The QBF $\psi = \forall (B_1 \cup \{x\}) \ldots Q_nB_n.\,\phi$ is
satisfiable if $\psi[\neg x]$ and $\psi[x]$ are satisfiable. The QBF $\psi =
\exists (B_1 \cup \{x\}) \ldots Q_nB_n.\,\phi$ is satisfiable if $\psi[\neg x]$
or $\psi[x]$ are satisfiable. 

A PCNF $\psi$ is \emph{satisfied under an assignment} $A$ if $\psi[A] = \top$
and \emph{falsified} under $A$ if $\psi[A] = \bot$. Satisfied
and falsified clauses are defined analogously.

Given a constraint $C$, $L_Q(C) := \{l \in C \mid q(l) = Q\}$ for $Q \in
\{\forall,\exists\}$ denotes the set of universal and existential literals in
$C$. For a clause $C$, \emph{universal reduction} produces the
clause $\UR{C} := C \setminus \{l \mid l \in L_\forall(C) \textnormal{
and } \forall l' \in L_{\exists}(C): l' < l\}$.

\emph{Q-resolution} of clauses is a combination of resolution for propositional logic
and universal reduction~\cite{DBLP:journals/iandc/BuningKF95}. Given two
non-tautological clauses $C_1$ and $C_2$ and a pivot variable $p$ such that
$q(p) = \exists$ and $p \in C_1$ and $\neg p \in C_2$. Let $C' := (\UR{C_1}
\setminus \{p\}) \cup (\UR{C_2} \setminus \{\neg p\})$ be the \emph{tentative
Q-resolvent} of $C_1$ and $C_2$. If $C'$ is non-tautological then it is the
 \emph{Q-resolvent} of $C_1$ and $C_2$ and we write $C' = C_1 \otimes C_2$. Otherwise,
$C_1$ and $C_2$ do not have a Q-resolvent.

Given a PCNF $\psi := \hat{Q}.\,\phi$, a \emph{Q-resolution derivation} of a
clause $C$ from $\psi$ is the successive application of Q-resolution and
universal reduction to clauses in $\psi$ and previously derived clauses
resulting in $C$. We represent a derivation as a directed acyclic
graph (DAG) with edges (1) $C'' \rightarrow C'$ if $C' = \UR{C''}$ and (2)
$C_1 \rightarrow C'$ and $C_2 \rightarrow C'$ if $C' = C_1 \otimes C_2$. 
We write $\hat{Q}. \phi \vdash C$ if there is a derivation of a clause $C$
from $\psi$. Otherwise, we write $\hat{Q}. \phi \nvdash
C$. Q-resolution is a sound and refutationally-complete proof system for
QBFs~\cite{DBLP:journals/iandc/BuningKF95}. A \emph{Q-resolution proof} of an
unsatisfiable PCNF $\psi$ is a Q-resolution derivation of the empty clause.


\section{Search-Based QBF Solving} \label{sec_qcdcl_description}

We briefly describe search-based QBF solving with conflict-driven clause
learning and solution-driven cube learning
(QCDCL)~\cite{DBLP:journals/jar/CadoliSGG02,DBLP:journals/jair/GiunchigliaNT06,DBLP:conf/tableaux/Letz02,DBLP:conf/sat/LonsingEG13,DBLP:conf/cp/ZhangM02}
and related properties. 
In the context of \emph{incremental} QBF solving, clause and cube learning
requires a special treatment, which we address in
Section~\ref{sec_inc_satisfiability}.

Given a PCNF $\psi$, a QCDCL-based QBF solver successively assigns the
variables to generate an assignment $A$.  If $\psi$ is falsified under $A$,
i.e.~$\psi[A] = \bot$, then a new learned clause $C$ is derived by
Q-resolution and added to $\psi$. If $\psi$ is unsatisfiable, then finally the
empty clause will be derived by clause learning. If $\psi$ is satisfied under $A$,
i.e.~$\psi[A] = \top$, then a new learned \emph{cube} is constructed based on
the following \emph{model generation rule}, \emph{existential reduction} and
\emph{cube resolution}.

\begin{definition}[model generation
rule~\cite{DBLP:journals/jair/GiunchigliaNT06}] \label{def_model} Given a PCNF
$\psi := \hat{Q}. \phi$, an assignment $A$ such that $\psi[A] = \top$ is a 
\emph{model\footnote{We adopted this definition of models from~\cite{DBLP:conf/tableaux/Letz02}.} of $\psi$.} An \emph{initial
cube} $C = (\bigwedge_{l_i \in A} l_i)$ is a conjunction over the literals of
a model $A$. 
\end{definition}

\begin{definition}[\cite{DBLP:journals/jair/GiunchigliaNT06}]
Given a cube $C$, \emph{existential reduction} produces the reduced
cube $\ER{C} := C \setminus
\{l \mid l \in L_{\exists}(C) \textnormal{ and } \forall l' \in
L_{\forall}(C): l' < l\}$. 
\end{definition}

\begin{definition}[cube
resolution~\cite{DBLP:journals/jair/GiunchigliaNT06,DBLP:conf/cp/ZhangM02}]
Given two non-contradictory cubes $C_1$ and $C_2$, \emph{cube resolution} is
defined analogously to Q-resolution for clauses, except that existential
reduction is applied and the pivot variable must be universal. The cube
resolvent of $C_1$ and $C_2$ (if it exists) is denoted by $C := C_1 \otimes
C_2$.
\end{definition}

If $\psi$ is satisfiable, then finally the empty cube will be derived by cube
learning (Theorem~5 in~\cite{DBLP:journals/jair/GiunchigliaNT06}).  Whereas in
clause learning initially clauses of the input PCNF $\psi$ can be resolved, in
cube learning first initial cubes have to be generated by the model generation
rule, which can then be used to produce cube resolvents. 
Similar to Q-resolution derivations (DAGs) of clauses and Q-resolution proofs, we define \emph{cube resolution
derivations} of cubes and \emph{proofs of satisfiability}. 

\begin{figure}[t]
\hspace{0.12\textwidth} \emph{Clause derivation:} \hspace{0.28\textwidth} \emph{Cube derivation:}

\bigskip

\begin{minipage}{0.5\textwidth} 
\centering
\vspace{-0.3cm}
\includegraphics{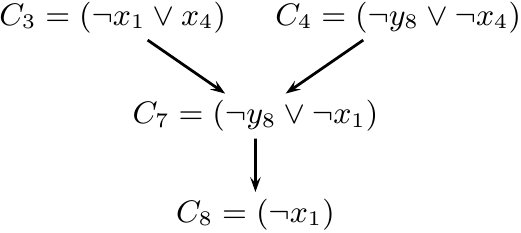}
\end{minipage}
\begin{minipage}{0.5\textwidth}
\centering
\includegraphics{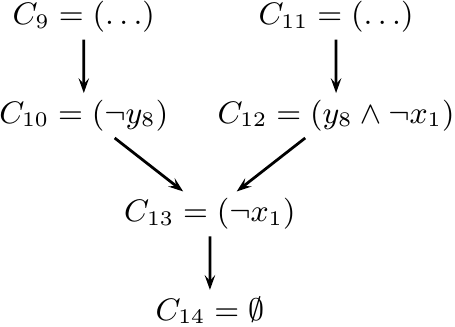}
\end{minipage}
\caption{Derivation DAGs of the clauses and cubes from
  Example~\ref{ex_running}. The literals in the initial cubes $C_9$ and
  $C_{11}$ have been omitted in the figure to save space.}
\label{fig_derivations}
\end{figure}

\begin{example} \label{ex_running}
Given the satisfiable PCNF $\psi
:= \exists x_1 \forall y_8 \exists x_5,x_2,x_6,x_4.\,\phi$, where $\phi :=
\bigwedge_{i := 1,\ldots, 6} C_i$ with $C_{1} := (y_8 \vee \neg x_5)$, $C_{2}
:= (x_2 \vee \neg x_6)$, $C_{3} := (\neg x_1 \vee x_4)$, $C_{4} := (\neg y_8
\vee \neg x_4)$, $C_{5} := (x_1 \vee x_6)$, and $C_{6} := (x_4 \vee x_5)$.

Figure~\ref{fig_derivations} shows the derivation of the clauses $C_7 := C_{3} \otimes
C_{4} = (\neg y_8 \vee \neg x_1)$ and $C_8 := \UR{C_7} = (\neg x_1)$ by
Q-resolution and universal reduction.

The assignment $A_1 := \{x_6, x_2, \neg y_8, \neg x_5,x_4\}$ is
a model of $\psi$ by Definition~\ref{def_model}. 
Hence $C_9 := (x_6 \wedge x_2 \wedge \neg y_8 \wedge \neg x_5 \wedge x_4)$ is an initial cube. Existential
reduction of $C_9$ produces the cube $C_{10} := \ER{C_9} = (\neg y_8)$. 
Similarly, $A_2 := \{y_8, \neg x_4, \neg x_1, x_5, x_6,
x_2\}$  is a model of $\psi$ and $C_{11} := (y_8 \wedge \neg x_4 \wedge
\neg x_1 \wedge x_5 \wedge x_6 \wedge x_2)$ is an initial cube.  Existential
reduction of $C_{11}$ produces the cube $C_{12} := \ER{C_{11}} = (y_8 \wedge \neg
x_1)$. The cube $C_{13} := (\neg x_1)$ is obtained by resolving $C_{10} = (\neg y_8)$ and $C_{12} = (y_8 \wedge
\neg x_1)$. Finally, existential
reduction of $C_{13}$ produces the empty cube $C_{14} := \ER{C_{13}} = \emptyset$, which proves that the PCNF $\psi$
is satisfiable.
\end{example}

A QCDCL-based solver implicitly constructs derivation DAGs in constraint
learning. However, typically only selected constraints of these derivations
are kept as learned constraints in an \emph{augmented
CNF}~\cite{DBLP:conf/cp/ZhangM02}.

\begin{definition} \label{def_aug_cnf}
Let $\psi := \hat{Q}.\,\phi$ be a PCNF. The \emph{augmented CNF (ACNF)} of $\psi$ has the form $\psi' :=
\hat{Q}.\,(\phi \wedge \LCL \vee \LCU)$, where $\hat{Q}$ is the
quantifier prefix, $\phi$ is the set of original clauses, $\LCL$
is a CNF containing the learned clauses, and $\LCU$ is a DNF
containing the learned cubes obtained by clause and cube learning in
QCDCL.
\end{definition} 

Given an ACNF $\psi'$ and an assignment $A$, the notation $\psi'[A]$ is
defined similarly to PCNFs. Analogously to clause derivations, we write
$\hat{Q}.\,\phi \vdash C$ if there is a derivation of a cube $C$ from the PCNF
$\hat{Q}. \phi$. During a run of a QCDCL-based solver the learned constraints can be derived from the current PCNF.

\begin{proposition} \label{prop_lcl_lcu_properties}
Let $\psi' := \hat{Q}.\,(\phi \wedge \LCL \vee \LCU)$ be the ACNF obtained by QCDCL from a PCNF $\psi := \hat{Q}.\,\phi$. It holds that
(1) $\forall C \in \LCL: \hat{Q}.\,\phi \vdash C$ and (2) $\forall C \in \LCU:
\hat{Q}.\,\phi \vdash C$.
\end{proposition}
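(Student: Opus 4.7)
I would prove both parts simultaneously by strong induction on the step $k$ at which a constraint is added to $\LCL \cup \LCU$ during the QCDCL run. The induction hypothesis at step $k$ is that every constraint learned strictly before step $k$ already has a Q-resolution derivation (if it is a clause) or a cube-resolution derivation (if it is a cube) from the input PCNF $\hat{Q}.\,\phi$.

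For part (1), a newly learned clause $C$ is produced by conflict analysis as a sequence of Q-resolution and universal-reduction steps whose leaves are either clauses of the original CNF $\phi$ or previously learned clauses from $\LCL$. For each leaf of the latter kind, the induction hypothesis supplies a Q-resolution derivation from $\hat{Q}.\,\phi$; splicing these sub-derivations into the conflict-analysis DAG in place of their roots yields a single Q-resolution derivation of $C$ from $\hat{Q}.\,\phi$, as required.

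For part (2), a newly learned cube $C$ is either obtained by cube resolution or existential reduction from previously derived cubes, or is an initial cube generated by the model generation rule (Definition~\ref{def_model}). The resolution/reduction case is handled by splicing in sub-derivations exactly as in the clause case. In the initial-cube case I need to verify that the underlying assignment $A$ is actually a model of the \emph{original} PCNF $\psi = \hat{Q}.\,\phi$, not merely of the current ACNF $\psi'$. There are two possibilities: either $(\phi \wedge \LCL)[A] = \top$, in which case $\phi[A] = \top$ and hence $\psi[A] = \top$; or some learned cube $C' \in \LCU$ is satisfied by $A$, in which case the induction hypothesis gives $\hat{Q}.\,\phi \vdash C'$ and soundness of cube resolution yields $\psi[A] = \top$ since $A$ extends $C'$. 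Either way $A$ is a model of $\psi$, so $C$ is a valid leaf of a cube-resolution derivation from $\hat{Q}.\,\phi$.

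The main obstacle is precisely this last step: certifying that initial cubes emitted by QCDCL correspond to models of the input rather than just of the augmented formula. This forces the two halves of the induction to be carried together, so that the soundness of the previously learned cubes in $\LCU$ can be invoked when justifying initial cubes introduced at later steps. Everything else reduces to routine substitution of derivation DAGs at their leaves.
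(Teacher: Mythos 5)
The paper does not actually prove this proposition: it is justified by a single sentence appealing to the correctness of constraint learning in non-incremental QCDCL, with the details deferred to the cited literature (Giunchiglia et al., Zhang/Malik, Letz). Your induction on the learning step, with splicing of previously obtained derivations into the leaves of each new conflict/solution analysis, is the standard way to make that appeal precise, and the clause half as well as the resolution/reduction case for cubes are fine.

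There is, however, a flawed step in your second sub-case for initial cubes. If the solver declares the ACNF satisfied because some learned cube $C' \in \LCU$ is satisfied under $A$, you conclude that ``soundness of cube resolution yields $\psi[A] = \top$ since $A$ extends $C'$.'' That does not follow. Derivability of $C'$ gives only the \emph{semantic} guarantee of Proposition~\ref{prop_learning_satequiv} (that $\hat{Q}.(\phi \vee C') \satequiv \hat{Q}.\phi$, hence that $\psi[A]$ is true), whereas the model generation rule of Definition~\ref{def_model} requires the \emph{syntactic} condition $\psi[A] = \top$, i.e.\ that every clause of $\phi$ is satisfied by $A$. These are not the same: in Example~\ref{ex_running}, after the cube $C_{10} = (\neg y_8)$ has been learned, a partial assignment containing $\neg y_8$ satisfies $C_{10}$ without satisfying $C_5 = (x_1 \vee x_6)$, so it is not a model of $\psi$, and the cube $\bigwedge_{l \in A} l$ built from such an $A$ need not be derivable (cube resolution has no weakening rule that would let you pad $C'$ with extra literals). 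The repair is easy and is what QCDCL actually does: when a learned cube is satisfied, the solver does \emph{not} invoke the model generation rule on $A$; it starts the new cube-resolution derivation from the satisfied cube $C'$ itself, which is derivable by your induction hypothesis and is therefore already covered by your splicing argument. With that case reassigned from the initial-cube branch to the resolution branch, your proof is correct; the genuine initial-cube case then only ever arises when $\phi[A] = \top$, exactly as the paper's statement of the model generation rule presupposes.
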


Proposition~\ref{prop_lcl_lcu_properties} follows from the correctness of
constraint learning in \emph{non-incre\-mental} QCDCL. That is, we assume that
the PCNF $\psi$ is not modified over time. However, as we point out below, in
\emph{incremental} QCDCL the constraints learned previously might no longer be
derivable after the PCNF has been modified.

\begin{definition} \label{def_constraint_correct}
Given the ACNF $\psi' := \hat{Q}.\,(\phi \wedge \LCL \vee \LCU)$ of the PCNF
$\psi := \hat{Q}.\,\phi$, a clause $C \in
\LCL$ (cube $C \in \LCU$) is \emph{derivable} with respect to $\psi$ if $\psi
\vdash C$. Otherwise, if $\psi \nvdash C$, then $C$ is \emph{non-derivable}.
\end{definition}

Due to the correctness of model generation, existential/universal reduction,
and resolution, constraints which are derivable from the PCNF $\psi$ can be
added to the ACNF $\psi'$ of $\psi$, which results in a
satisfiability-equivalent ($\satequiv$) formula.

\begin{proposition}[\cite{DBLP:journals/jair/GiunchigliaNT06}] \label{prop_learning_satequiv} 
Let $\psi' := \hat{Q}.\,(\phi \wedge \LCL \vee \LCU)$ be the ACNF of the PCNF
$\psi := \hat{Q}.\,\phi$. Then (1)
$\hat{Q}. \phi \satequiv \hat{Q}. (\phi \wedge \LCL)$ and (2) $\hat{Q}. \phi
\satequiv \hat{Q}. (\phi \vee \LCU)$.
\end{proposition}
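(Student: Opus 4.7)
The plan is to prove each part by induction on the size of the learned constraint set, reducing to a single-constraint version that invokes the soundness of the underlying proof system from Proposition~\ref{prop_lcl_lcu_properties}.

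For part (1), I would induct on $|\LCL|$, with $\LCL = \emptyset$ trivial. For the inductive step, pick $C \in \LCL$, let $\LCL' := \LCL \setminus \{C\}$, and reduce to showing $\hat{Q}.(\phi \wedge \LCL') \satequiv \hat{Q}.(\phi \wedge \LCL' \wedge C)$. The backward direction is immediate since strengthening a CNF cannot turn an unsatisfiable formula into a satisfiable one. The forward direction uses Proposition~\ref{prop_lcl_lcu_properties}(1): since $\hat{Q}.\phi \vdash C$, the soundness of Q-resolution and universal reduction implies that $C$ is entailed by $\phi$ under the QBF semantics, so any satisfying strategy for $\hat{Q}.(\phi \wedge \LCL')$ also satisfies $C$ on every branch consistent with it.

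For part (2), a dual induction on $|\LCU|$ reduces the claim to showing that for a single cube $C$ with $\hat{Q}.\phi \vdash C$, we have $\hat{Q}.\phi \satequiv \hat{Q}.(\phi \vee C)$. The forward direction is again trivial because weakening the matrix disjunctively preserves satisfiability. The backward direction is the subtle one: assuming $\hat{Q}.(\phi \vee C)$ is satisfiable, I must conclude $\hat{Q}.\phi$ is satisfiable. Here I would invoke the soundness of the cube proof system dual to Q-resolution: each initial cube used in the derivation of $C$ arose from a model of $\psi$ via the model generation rule, and both existential reduction and cube resolution preserve the property that the cube is witnessed by a satisfying assignment of $\psi$. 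Thus the very existence of a derivation $\hat{Q}.\phi \vdash C$ already implies that $\hat{Q}.\phi$ is satisfiable, regardless of whether the satisfying strategy for $\hat{Q}.(\phi \vee C)$ was exploiting $\phi$ or $C$.

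The main obstacle is making the cube backward direction rigorous. One could either appeal directly to the soundness theorems for cube resolution and model generation established in~\cite{DBLP:journals/jair/GiunchigliaNT06} (in which case the argument reduces to citing the proof-system soundness and induction), or give a self-contained reconstruction: walk up the cube derivation DAG of $C$ from its root to an initial cube, and use the facts that (i) existential reduction only removes existential literals whose assignment is irrelevant to satisfaction, and (ii) cube resolution combines two partial satisfying assignments into one whose universal pivot choice does not matter for satisfiability. Either way, the core of the proof is the same inductive schema; the rest is bookkeeping to extend from single constraints to the full sets $\LCL$ and $\LCU$.
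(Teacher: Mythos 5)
Your treatment of part (1) is essentially fine: it is the standard soundness of Q-resolution with universal reduction (any model of $\hat{Q}.\phi$, in the strategy sense, satisfies every derivable clause along every play), and the induction over $|\LCL|$ is routine. Note that the paper itself offers no proof at all --- the proposition is imported from Giunchiglia et al.~\cite{DBLP:journals/jair/GiunchigliaNT06} --- so the only meaningful comparison is with the proof there, which, like your part (1), works by induction on the structure of the derivation.

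Part (2), however, contains a genuine error. Your backward direction rests on the claim that ``the very existence of a derivation $\hat{Q}.\phi \vdash C$ already implies that $\hat{Q}.\phi$ is satisfiable.'' This is false. A \emph{model} in the sense of Definition~\ref{def_model} is merely an assignment $A$ with $\psi[A]=\top$, i.e., a satisfying assignment of the propositional matrix; it says nothing about the quantified formula. Take $\psi := \forall y\,\exists x.\,(x\vee y)\wedge(\neg x\vee y)$: the assignment $\{x,y\}$ satisfies the matrix, so $C:=(x\wedge y)$ is a derivable initial cube, yet $\psi$ is unsatisfiable (set $y$ to false). Only the derivation of the \emph{empty} cube certifies satisfiability. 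Consequently you cannot discharge the hypothesis ``$\hat{Q}.(\phi\vee C)$ is satisfiable'' by arguing that the conclusion holds unconditionally; the hypothesis must actually be used. The correct argument --- and the one in~\cite{DBLP:journals/jair/GiunchigliaNT06} --- is an induction on the cube derivation establishing $\hat{Q}.\phi \satequiv \hat{Q}.(\phi\vee C)$ at every node: for an initial cube $C$ the matrices $\phi$ and $\phi\vee C$ are even propositionally equivalent (any total assignment satisfying $C$ already satisfies every clause of $\phi$); for existential reduction one shows that a play satisfying $\ER{C}$ can be extended by the existential player to one satisfying $C$, because the removed existential literals lie to the right of all universal literals of $C$; for cube resolution on a universal pivot one case-splits on the universal player's choice of the pivot. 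Your fallback sketch gestures in this direction, but its invariant (``the cube is witnessed by a satisfying assignment of $\psi$'') is the same matrix-level notion and would not close the argument; the invariant must be the satisfiability-equivalence of the quantified formulas themselves.
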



\section{Incremental Search-Based QBF Solving} \label{sec_inc_satisfiability}

We define \emph{incremental QBF solving} as the problem of solving a sequence
 of \mbox{PCNFs}  $\psi_0, \psi_1,\ldots, \psi_n$ using a QCDCL-based solver. Thereby,
the goal is to not discard all the learned constraints after
the PCNF $\psi_i$ has been solved. 
 Instead, to the largest extent possible we
want to re-use the constraints that were learned from $\psi_i$ in the process
of solving the next PCNF $\psi_{i+1}$. To this end, the ACNF $\psi_{i+1}' =
\hat{Q}_{i+1}.\,(\phi_{i+1} \wedge \LCL_{i+1} \vee \LCU_{i+1})$ of $\psi_{i+1}$
for $i > 0$, which is maintained by the solver, must be initialized with a set
$\LCL_{i+1}$ of learned clauses and a set $\LCU_{i+1}$ of learned cubes such
that $\LCL_{i+1} \subseteq \LCL_{i}$, $\LCU_{i+1} \subseteq \LCU_{i}$ and
 Proposition~\ref{prop_learning_satequiv} holds with respect to $\psi_{i+1}$. The sets $\LCL_{i}$ and
$\LCU_{i}$ contain the clauses and cubes that were learned from the previous
PCNF $\psi_i$ and potentially can be used to derive further 
constraints from $\psi_{i+1}$. If $\LCL_{i+1} \not = \emptyset$ and
$\LCU_{i+1} \not = \emptyset$ at the beginning, then the solver solves the PCNF
$\psi_{i+1}$ \emph{incrementally}. For the first PCNF $\psi_0$ in the sequence, the
solver starts with empty sets of learned constraints in the ACNF $\psi_{0}' =
\hat{Q}_{0}.\,(\phi_{0} \wedge \LCL_{0} \vee \LCU_{0})$.

Each PCNF $\psi_{i+1}$ for $0 \leq i < n$ in the sequence $\psi_0,\psi_1,
\ldots, \psi_n$ has the form $\psi_{i+1} = \hat{Q}_{i+1}.\,\phi_{i+1}$.  The
CNF part $\phi_{i+1}$ of $\psi_{i+1}$ results from $\phi_{i}$ of the previous
PCNF $\psi_{i} = \hat{Q}_{i}.\,\phi_{i}$ in the sequence by addition and
deletion of clauses.  We write $\phi_{i+1} = (\phi_{i} \setminus
\phi^{\mathit{del}}_{i+1} ) \cup \phi^{\mathit{add}}_{i+1}$, where
$\phi^{\mathit{del}}_{i+1}$ and $\phi^{\mathit{add}}_{i+1}$ are the sets of
deleted and added clauses. The quantifier prefix $\hat{Q}_{i+1}$ of
$\psi_{i+1}$ is obtained from $\hat{Q}_{i}$ of $\psi_{i}$ by deletion and
addition of variables and quantifiers, depending on the clauses in
$\phi^{\mathit{add}}_{i+1}$ and $\phi^{\mathit{del}}_{i+1}$. That is, we
assume that the PCNF $\psi_{i+1}$ is closed and that its prefix
$\hat{Q}_{i+1}$ does not contain superfluous quantifiers and variables.

When solving the PCNF $\psi_i$ using a QCDCL-based QBF solver, learned clauses
and cubes accumulate in the corresponding ACNF $\psi_{i}' =
\hat{Q}_{i}.\,(\phi_{i} \wedge \LCL_{i} \vee \LCU_{i})$. Assume that the
learned constraints are derivable with respect to $\psi_i$. 
 The PCNF $\psi_i$ is modified
to obtain the next PCNF $\psi_{i+1}$ to be solved. The learned
constraints in $\LCL_i$ and $\LCU_i$ might become non-derivable with respect to
$\psi_{i+1}$ in the sense of
Definition~\ref{def_constraint_correct}. Consequently,
Proposition~\ref{prop_learning_satequiv} might no longer hold for the ACNF
$\psi_{i+1}' = \hat{Q}_{i+1}.\,(\phi_{i+1} \wedge \LCL_{i+1} \vee \LCU_{i+1})$
of the new PCNF $\psi_{i+1}$ if previously learned constraints from $\LCL_i$
and $\LCU_{i}$ appear in $\LCL_{i+1}$ and $\LCU_{i+1}$. In this case, the solver
might produce a wrong result when solving $\psi_{i+1}$. 

\subsection{Clause Learning} \label{sec_inc_cdcl}

Assume that the PCNF $\psi_i = \hat{Q}_{i}.\,\phi_{i}$ has been solved and
learned constraints have been collected in the ACNF $\psi_i' =
\hat{Q}_{i}.\,(\phi_{i} \wedge \LCL_{i} \vee \LCU_{i})$.  The clauses in
$\phi^{\mathit{del}}_{i+1}$ are deleted from $\phi_i$ to obtain the CNF part
$\phi_{i+1} = (\phi_i \setminus \phi^{\mathit{del}}_{i+1}) \cup
\phi^{\mathit{add}}_{i+1}$ of the next PCNF $\psi_{i+1} = \hat{Q}_{i+1}.\,\phi_{i+1}$. If the
 derivation of a learned clause $C \in \LCL_i$ depends on deleted
clauses in $\phi^{\mathit{del}}_{i+1}$, then we might have that
$\psi_i \vdash C$ but $\psi_{i+1} \nvdash C$. In this case, 
 $C$ is non-derivable with respect to the next PCNF $\psi_{i+1}$. Hence 
$C$ must be discarded before solving $\psi_{i+1}$ starts so
that $C \not \in \LCL_{i+1}$ in the initial ACNF $\psi_{i+1}' =
\hat{Q}_{i+1}.\,(\phi_{i+1} \wedge \LCL_{i+1} \vee
\LCU_{i+1})$. Otherwise, if $C \in \LCL_{i+1}$ then the solver might
construct a bogus Q-resolution proof for the PCNF $\psi_{i+1}$ and, if
$\psi_{i+1}$ is satisfiable, erroneously conclude that $\psi_{i+1}$ is
unsatisfiable.

\begin{example} \label{ex_clause_delete_incorrect_clause}
Consider the PCNF $\psi$ from Example~\ref{ex_running}. The derivation of the clause $C_8 = (\neg x_1)$ shown in
Fig.~\ref{fig_derivations} depends on
the clause $C_4 = (\neg y_8 \vee \neg x_4)$. We have that $\psi \vdash C_8$. Let $\psi_1$ be the PCNF obtained
from $\psi$ by deleting $C_4$. Then $\psi_1 \nvdash C_8$ because $C_3 = (\neg
x_1 \vee x_4)$ is
the only clause which contains the literal $\neg x_1$. Hence a possible
derivation of the clause $C_8 = (\neg x_1)$ must use $C_3$. However, no such
derivation exists in $\psi_1$. There is no clause $C'$ containing a literal
$\neg x_4$ which can be resolved with $C_3$ to produce $C_8 = (\neg
x_1)$ after a sequence of resolution steps. 
\end{example}

Consider the PCNF $\psi_{i+1} = \hat{Q}_{i+1}.\,\phi_{i+1}$ with $\phi_{i+1} =
\phi_i \cup \phi^{\mathit{add}}_{i+1}$ which is obtained from
$\hat{Q}_{i}.\,\phi_{i}$ by \emph{only adding} the clauses
$\phi^{\mathit{add}}_{i+1}$, but not deleting any clauses. 
Assuming that $\hat{Q}_i. \phi_i
\vdash C$ for all $C \in \LCL_i$ in the ACNF $\psi_i' =
\hat{Q}_{i}.\,(\phi_{i} \wedge \LCL_{i} \vee \LCU_{i})$, also
$\hat{Q}_{i+1}.\,(\phi_i \cup \phi^{\mathit{add}}_{i+1}) \vdash C$. Hence all the
learned clauses in $\LCL_i$ are derivable with respect to the next PCNF
$\psi_{i+1}$ and can be added to the ACNF $\psi_{i+1}'$.

\subsection{Cube Learning} \label{sec_inc_sdcl}

Like above, let $\psi_i' = \hat{Q}_{i}.\,(\phi_{i} \wedge \LCL_{i} \vee
\LCU_{i})$ be the ACNF of the previously solved PCNF $\psi_i =
\hat{Q}_{i}.\,\phi_{i}$.  Dual to clause deletions, the addition of clauses
to $\phi_i$ can make learned cubes in $\LCU_i$ non-derivable with respect to the
next PCNF $\psi_{i+1} = \hat{Q}_{i+1}.\,\phi_{i+1}$ to be solved. The clauses
in $\phi^{\mathit{add}}_{i+1}$ are added to $\phi_i$ to obtain the CNF part
$\phi_{i+1} = (\phi_i \setminus \phi^{\mathit{del}}_{i+1}) \cup
\phi^{\mathit{add}}_{i+1}$ of $\psi_{i+1}$.  An initial
cube $C \in \LCU_i$ has been obtained from a model $A$ of the previous PCNF
$\psi_i$, i.e.~$\psi_i[A] = \top$. We might have that
$\psi_{i+1}[A] \not = \top$ with respect to the next PCNF $\psi_{i+1}$ because
of an added clause $C' \in \phi^{\mathit{add}}_{i+1}$ (and hence also $C' \in
\phi_{i+1}$) such that $C'[A] \not = \top$. Therefore, $A$ is not a model of
$\psi_{i+1}$ and the initial cube $C$ is non-derivable with
respect to $\psi_{i+1}$, i.e.~$\hat{Q}_i. \phi_i \vdash C$ but $\hat{Q}_{i+1}. \phi_{i+1} \nvdash C$.  Hence
$C$ and every cube whose derivation depends on $C$ must be discarded to prevent the solver from
generating a bogus cube resolution proof for $\psi_{i+1}$.  If $\psi_{i+1}$ is
unsatisfiable, then the solver might erroneously conclude that $\psi_{i+1}$ is
satisfiable. That is, Proposition~\ref{prop_learning_satequiv} might not hold
with respect to non-derivable cubes and the ACNF $\psi_{i+1}'$ of
$\psi_{i+1}$.

\begin{example} \label{ex_clause_add_incorrect_cube}
Consider the PCNF $\psi$ from Example~\ref{ex_running}.  The
derivation of the cube $C_{10} = (\neg y_8)$ shown in
Fig.~\ref{fig_derivations} depends on the initial cube $C_9 = (x_6
\wedge x_2 \wedge \neg y_8 \wedge \neg x_5 \wedge x_4)$, which has
been generated from the model $A_1 = \{x_6, x_2, \neg y_8, \neg
x_5,x_4\}$. The cube $C_9$ is derivable with respect to $\psi$ since
$\psi[A_1] = \top$, and hence $\psi \vdash C_9$. The cube $C_{10}$ is also
derivable since $C_{10} = \ER{C_9}$.  
Assume that the clause $C_0 := (\neg x_2 \vee \neg x_4)$ is added to
$\psi$ resulting in the unsatisfiable PCNF $\psi_2$. Now $C_9$ is non-derivable with respect
to $\psi_2$ since $C_0[A_1] = \bot$. Further, $\psi_2 \nvdash
C_{10}$.
\end{example}

Consider the PCNF $\psi_{i+1} = \hat{Q}_{i+1}.\,\phi_{i+1}$ with $\phi_{i+1} =
\phi_i \setminus \phi^{\mathit{del}}_{i+1}$ which is obtained from
$\hat{Q}_{i}.\,\phi_{i}$ by \emph{only deleting} the clauses
$\phi^{\mathit{del}}_{i+1}$, but not adding any clauses. If after the clause deletions
 some variable $x$ does not occur
anymore in the resulting PCNF $\psi_{i+1}$, then $x$ is removed from the
quantifier prefix of $\psi_{i+1}$ and from every cube $C  \in \LCU_i$ which was learned
when solving the previous PCNF $\psi_{i}$.
Proposition~\ref{prop_learning_satequiv} holds for the cleaned up cubes $C' =
C \setminus \{l \mid v(l) = x\}$ for all $C \in \LCU_i$ with respect to
$\psi_{i+1}$ and hence $C'$ can be added to the ACNF $\psi_{i+1}'$.

\begin{proposition} \label{prop_cleaned_up_cubes_sound}
Let $\psi_{i}' := \hat{Q}_{i}.\,(\phi_{i} \wedge \LCL_{i} \vee \LCU_{i})$ be
the ACNF of the PCNF $\psi_{i} := \hat{Q}_{i}.\,\phi_{i}$. Let $\psi_{i+1} :=
\hat{Q}_{i+1}.\,\phi_{i+1}$ be the PCNF resulting from $\psi_{i}$ with $\phi_{i+1} = (\phi_{i} \setminus
\phi^{\mathit{del}}_{i+1})$, where the
variables $V^{\mathit{del}}_{i+1}$ no longer occur in $\phi_{i+1}$ and are
removed from $\hat{Q}_{i}$ to obtain $\hat{Q}_{i+1}$. Given a cube $C
\in \LCU_{i}$, let $C' := C \setminus \{l \mid v(l) \in
V^{\mathit{del}}_{i+1}\}$. Proposition~\ref{prop_learning_satequiv} holds for
$C'$ with respect to $\hat{Q}_{i+1}.\,\phi_{i+1}$: $\hat{Q}_{i+1}.\,\phi_{i+1} \satequiv
\hat{Q}_{i+1}.\,(\phi_{i+1} \vee C')$.
\end{proposition}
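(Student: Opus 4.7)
My plan is induction on the structure of the cube resolution derivation $\pi$ of $C$ from $\psi_i$ whose existence is guaranteed by Proposition~\ref{prop_lcl_lcu_properties}. The statement to prove inductively is: for every cube $D$ appearing in $\pi$, the restricted cube $D^* := D \setminus \{l \mid v(l) \in V^{\mathit{del}}_{i+1}\}$ is \emph{sound} for $\psi_{i+1}$ in the sense that $\hat{Q}_{i+1}.\,\phi_{i+1} \satequiv \hat{Q}_{i+1}.\,(\phi_{i+1} \vee D^*)$. Applied to the root $D = C$ of $\pi$ this yields $C^* = C'$ sound for $\psi_{i+1}$, which is exactly the claim.

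The single semantic tool I will use in every inductive step is a \emph{strengthening principle}: if $E \subseteq F$ as literal sets and $E$ is sound for $\psi_{i+1}$, then $F$ is also sound. This holds because $F$ implies $E$ as a propositional conjunction, so any strategy making $\phi_{i+1} \vee F$ true also makes $\phi_{i+1} \vee E$ true; combined with the trivial direction (adding a disjunct only eases satisfiability) this yields sat-equivalence for $F$.

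For the base case, an initial cube $D$ is the conjunction of the literals of a model $A$ of $\psi_i$. Because $\phi_{i+1} \subseteq \phi_i$ and no variable in $V^{\mathit{del}}_{i+1}$ occurs in $\phi_{i+1}$, the restricted assignment $A^*$ still satisfies $\phi_{i+1}$, so $A^*$ is a model of $\psi_{i+1}$ and $D^* = A^*$ is an initial cube for $\psi_{i+1}$, hence sound by the standard model-generation argument underlying Proposition~\ref{prop_learning_satequiv}. For an existential-reduction step $D = \ER{D_1}$, applying $\ER{\cdot}$ in $\psi_{i+1}$ to the sound (by IH) cube $D_1^*$ yields a sound cube $\ER{D_1^*}$ which satisfies $\ER{D_1^*} \subseteq D^*$, because deleting $V^{\mathit{del}}_{i+1}$-universals from $D_1$ can only enlarge the set of reducible existentials; strengthening applied to $\ER{D_1^*}$ then gives $D^*$ sound. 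For a cube-resolution step $D = D_1 \otimes D_2$ on universal pivot $u$, a similar set computation shows either $D_1^* \otimes D_2^* \subseteq D^*$ (when $u \notin V^{\mathit{del}}_{i+1}$, so the pivot survives and the same resolution can be performed in $\psi_{i+1}$) or $\ER{D_1^*} \subseteq D^*$ (when $u \in V^{\mathit{del}}_{i+1}$, so the pivot has vanished but $\ER{D_1^*}$ is still a subset of one of the restricted operands making up $D^*$); in both cases strengthening applied to a sound smaller cube yields soundness of $D^*$.

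I expect the main obstacle to be exactly this last subcase, a cube resolution whose universal pivot lies in $V^{\mathit{del}}_{i+1}$. Syntactically $D^*$ need not be derivable in $\psi_{i+1}$ at all: neither cube resolution (the pivot is gone) nor existential reduction can produce the required union cube through a single rule application. The strengthening principle sidesteps this obstruction by delivering soundness of $D^*$ without a formal derivation, which is exactly why the proposition is stated as a sat-equivalence via Proposition~\ref{prop_learning_satequiv} rather than as a derivability claim.
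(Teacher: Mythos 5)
Your proof is correct and follows the same outer induction on the cube derivation DAG as the paper, but the way you discharge the inductive steps is genuinely different and, in two places, cleaner. The paper's sketch asserts exact identities at each step --- $C' = \ER{C_1'}$ for existential reduction and $C' = C_1'\otimes C_2'$ or $C'=(C_1'\wedge C_2')$ for resolution --- whereas in fact only containments hold in general: deleting the universal literals over $V^{\mathit{del}}_{i+1}$ from $C_1$ can make strictly more existential literals reducible, so $\ER{C_1'}$ may be a \emph{proper} subset of $C'$ (e.g.\ $C_1=(e_1\wedge u\wedge e_2)$ with $e_1<u<e_2$ and $u\in V^{\mathit{del}}_{i+1}$ gives $C'=(e_1)$ but $\ER{C_1'}=\emptyset$). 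Your subset-plus-strengthening lemma (if $E\subseteq F$ and $\hat{Q}.\phi\satequiv\hat{Q}.(\phi\vee E)$ then $\hat{Q}.\phi\satequiv\hat{Q}.(\phi\vee F)$, by monotonicity of QBF satisfiability under matrix implication) absorbs this discrepancy and is verifiably sound, so it both repairs that imprecision and, more importantly, eliminates the paper's hardest step: where the pivot lies in $V^{\mathit{del}}_{i+1}$ the paper must argue that the union cube $C_1'\wedge C_2'$ is non-contradictory and sound via tree-like models, while you simply observe that $\ER{D_1^*}\subseteq D^*$ and strengthen, needing neither the non-contradictoriness argument nor the tree-model reasoning. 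What you give up is exactness --- you prove soundness of $D^*$ without exhibiting a derivation of it, which is fine since the proposition only claims sat-equivalence. One shared caveat to be aware of: like the paper, you rely on existential reduction and cube resolution preserving sat-equivalence when applied to cubes that are merely \emph{sound} rather than derivable from $\psi_{i+1}$ (the restricted cubes $D^*$ are generally not derivable); this is the standard semantic soundness of the cube-learning rules, but it is the one load-bearing assumption in both arguments that is not spelled out.
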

\begin{proof}[Sketch]
By induction on the structure of the derivations of cubes in $\LCU_i$.

Let $C \in \LCU_i$ be an initial cube due to the assignment $A$
with $\psi_i[A] = \top$. For $A' := A \setminus \{l \mid v(l) \in
V^{\mathit{del}}_{i+1}\}$, we have $\psi_{i+1}[A'] = \top$ since 
all the clauses containing the variables in $V^{\mathit{del}}_{i+1}$ were
deleted from $\psi_i$ to obtain $\psi_{i+1}$. Then the claim holds for the
initial cube $C' = C \setminus \{l \mid v(l) \in V^{\mathit{del}}_{i+1}\} = (\bigwedge_{l_i
  \in A'} l_i)$ since $\psi_{i+1} \vdash C'$.

Let $C \in \LCU_i$ be obtained from $C_1 \in \LCU_i$ by existential reduction
such that $C = \ER{C_1}$. Assuming that the claim holds for $C_1' = C_1
\setminus \{l \mid v(l) \in V^{\mathit{del}}_{i+1}\}$, it also
holds for $C' = C \setminus \{l \mid v(l) \in V^{\mathit{del}}_{i+1}\} = \ER{C_1'}$ since
existential reduction removes existential literals which are
maximal with respect to the prefix ordering.

Let $C \in \LCU_i$ be obtained from $C_1,C_2 \in \LCU_i$ by
resolution on variable $x$ with $x \in C_1$, $\neg x \in C_2$. Assume that the claim holds for $C_1' = C_1
\setminus \{l \mid v(l) \in V^{\mathit{del}}_{i+1}\}$ and $C_2' = C_2
\setminus \{l \mid v(l) \in V^{\mathit{del}}_{i+1}\}$, i.e.~$\hat{Q}_{i+1}.\,\phi_{i+1} \satequiv
\hat{Q}_{i+1}.\,(\phi_{i+1} \vee C_1')$ and $\hat{Q}_{i+1}.\,\phi_{i+1} \satequiv
\hat{Q}_{i+1}.\,(\phi_{i+1} \vee C_2')$. 
If $x \not \in V^{\mathit{del}}_{i+1}$ then the claim also holds for
$C' = C \setminus \{l \mid v(l) \in V^{\mathit{del}}_{i+1}\} = C_1'
\otimes C_2'$ with $x \in C_1'$, $\neg x \in C_2'$ due to the
correctness of resolution (Proposition~\ref{prop_learning_satequiv}). 
If $x \in V^{\mathit{del}}_{i+1}$ then the claim also holds for $C' =
C \setminus \{l \mid v(l) \in V^{\mathit{del}}_{i+1}\} = (C_1' \wedge
C_2')$ since $\{y, \neg y\} \not \subseteq (C_1' \cup C_2')$ for all
variables $y$, which can be proved by reasoning with tree-like models
of QBFs~\cite{DBLP:conf/cp/SamulowitzDB06}.
\qed
\end{proof}

If a variable $x$ no longer occurs in the formula, then cubes where $x$ has
been removed might become non-derivable. However, due to
Propositions~\ref{prop_learning_satequiv} and~\ref{prop_cleaned_up_cubes_sound}
it is sound to keep all the cleaned up cubes (resolution is not
inferentially-complete). Moreover, due to the correctness of resolution and
existential reduction, Proposition~\ref{prop_learning_satequiv} also holds for
new cubes derived from the cleaned up cubes.

In practice, the goal is to keep as many 
 learned constraints as possible because they prune the search space and
can be used to derive further constraints. Therefore, subsets
$\LCL_{i+1} \subseteq \LCL_{i}$ and $\LCU_{i+1} \subseteq \LCU_{i}$ of the
learned clauses $\LCL_{i}$ and cubes $\LCU_{i}$ must be selected so that
Proposition~\ref{prop_learning_satequiv} holds with respect to the initial
ACNF $\psi_{i+1}' = \hat{Q}_{i+1}.\,(\phi_{i+1} \wedge \LCL_{i+1} \vee
\LCU_{i+1})$ of the PCNF $\psi_{i+1}$ to be solved next.


\section{Implementing an Incremental QBF Solver}

We describe the implementation of our incremental
QCDCL-based solver \depqbf. Our approach is general and
fits any QCDCL-based solver. For incremental solving we
do not apply a sophisticated analysis of variable dependencies by dependency
schemes in \depqbf~\cite{DBLP:conf/sat/LonsingB10}. Instead, as many other QBF
solvers, we use the linear ordering given by the quantifier prefix. 
We implemented a stack-based representation of the CNF part of
PCNFs based on selector variables and assumptions. Assumptions were also used
for incremental QBF-based BMC of partial
designs~\cite{DBLP:conf/date/MarinMLB12} and are common in incremental SAT
solving~\cite{DBLP:conf/sat/AudemardLS13,DBLP:journals/entcs/EenS03,DBLP:conf/sat/LagniezB13,DBLP:conf/sat/NadelR12}.

We address the problem of checking which learned constraints can be kept across
different solver runs after the current PCNF has been modified. To this end, we
present approaches to check if a constraint learned from the previous
PCNF is still derivable from the next one, which makes sure that
Proposition~\ref{prop_learning_satequiv} holds. Similar to incremental SAT solving, 
selector variables are used to handle the learned clauses. Regarding
learned cubes, selector variables can also be used (although in a way
asymmetric to clauses), in addition to an
alternative approach relying on full derivation DAGs, which have to be kept in
memory. Learned cubes might become non-derivable by the
deletion of clauses and superfluous variables, but still can be kept due to
Proposition~\ref{prop_cleaned_up_cubes_sound}. We implemented a simple approach
which, after clauses have been added to the formula, allows to keep only
initial cubes but not cubes obtained by resolution or existential reduction.


\subsection{QBF Solving under Assumptions} \label{sec_qbf_assumptions}

Let $\psi := Q_1B_1Q_2B_2 \ldots Q_nB_n.\,\phi$ be a PCNF. We define a set $A
:= \{l_1,\ldots,l_k\}$ of \emph{assumptions}  as an assignment such that
$v(l_i) \in B_1$ for all literals $l_i \in A$. The variables assigned
by $A$ are from the first block $B_1$ of $\psi$.  Solving the PCNF $\psi$
\emph{under the set $A$ of assumptions} amounts to solving the PCNF
$\psi[A]$. The definition of assumptions can be applied recursively to the
PCNF $\psi[A]$. If $A$ assigns all the variables in $B_1$, then variables from
$B_2$ can be assigned as assumptions with respect to $\psi[A]$, since $B_2$ is the
first block in the quantifier prefix of $\psi[A]$.

We implemented the handling of assumptions according to the
\emph{literal-based single instance (LS)} approach (in the terminology
of~\cite{DBLP:conf/sat/NadelR12}). Thereby, the assumptions in $A$ are treated
in a special way so that the variables in $A$ are never selected as pivots
in the resolution derivation of a learned constraint according to QCDCL-based
learning.  Similar to SAT-solving under assumptions, LS allows to keep all the
constraints that were learned from the PCNF $\psi[A]$ under a set $A$ of
assumptions when later solving $\psi[A']$ under a different set $A'$ of
assumptions.


\subsection{Stack-Based CNF Representation} \label{sec_cnf_stack}

In \depqbf, the CNF part $\phi$ of an ACNF $\psi_i' =
\hat{Q}_i.\,(\phi_i \wedge \LCL_i \vee \LCU_i)$ to be solved is
represented as a stack of clauses. The clauses on the stack are
grouped into \emph{frames}. The solver API provides functions to push
new frames onto the stack, pop present frames from the stack, and to
add new clauses to the current topmost frame. Each \emph{push}
operation opens a new topmost frame $f_j$.  
New clauses are always added to the topmost
frame $f_j$. Each new frame $f_j$ opened by a \emph{push} operation is
associated with a fresh \emph{frame selector variable} $s_j$. Frame
selector variables are existentially quantified and put into a
separate, leftmost quantifier block $B_0$  
i.e.~the current ACNF $\psi_i'$ has the form $\psi_i' =
\exists B_0\hat{Q}_i.\,(\phi_i \wedge \LCL_i \vee \LCU_i)$. Before a
new clause $C$ is added to frame $f_j$, the frame selector variable $s_j$ of $f_j$ is
inserted into $C$ so that in fact the clause $C' = C \cup \{s_j\}$ is added
to $f_j$. If all the selector variables are assigned to \emph{false} then
under that assignment every clause $C' = C \cup \{s_j\}$ is syntactically equivalent to $C$. 

The purpose of the frame selector variables is to \emph{enable} or \emph{disable} the
clauses in the CNF part $\phi_i$ with respect to the \emph{push} and
\emph{pop} operations applied to the clause stack. If the selector variable
$s_j$ of a frame $f_j$ is assigned to \emph{true} then all the clauses of
$f_j$ are satisfied under that assignment. In this case, these
satisfied clauses are considered disabled because they can not be used to
derive new learned clauses in QCDCL. Otherwise, the assignment of
\emph{false} to $s_j$ does not satisfy any clauses in $f_j$. Therefore these
clauses are considered enabled.

Before the solving process starts, the clauses of frames popped from
the stack are disabled and the clauses of frames still on the stack are
enabled by assigning the selector variables to \emph{true} and \emph{false},
respectively.  The selector variables are assigned as 
assumptions. This is possible because these variables are in the leftmost
quantifier block $B_0$ of the ACNF $\psi_i' = \exists B_0\hat{Q}_i.\,(\phi_i \wedge
\LCL_i \vee \LCU_i)$ to be solved.

The idea of enabling and disabling clauses by selector variables and
assumptions originates from incremental SAT solving~\cite{DBLP:journals/entcs/EenS03}. This approach was also applied
to bounded model checking of partial designs by incremental QBF
solving~\cite{DBLP:conf/date/MarinMLB12}. In \depqbf, we implemented the
\emph{push} and \emph{pop} operations related to the clause stack by selector
variables  similarly to the SAT solver
\picosat~\cite{DBLP:journals/jsat/Biere08}.

  In the implementation of \depqbf, frame selector variables are maintained
entirely by the solver.  Depending on the \emph{push} and \emph{pop} operations, selector variables 
 are automatically inserted into added clauses and assigned  as assumptions. This
approach saves the user the burden of inserting selector variables manually
into the QBF encoding of a problem and assigning them as assumptions via the solver API. Manual insertion is typically applied in
incremental SAT solving based on assumptions as pioneered by
\minisat~\cite{DBLP:conf/sat/EenS03,DBLP:journals/entcs/EenS03}. We argue that
the usability of an incremental QBF solver is improved considerably if the
selector variables are maintained by the solver. For example, from the
perspective of the user, the QBF encoding contains only variables relevant to 
the encoded problem. 

In the following, we consider the problem of maintaining the sets of learned
constraints across different solver runs. As pointed out in
Section~\ref{sec_inc_satisfiability}, Proposition~\ref{prop_learning_satequiv}
still holds for learned clauses (cubes) after the addition (deletion) of
clauses to (from) the PCNF. Therefore, we present the maintenance of learned
constraints separately for clause additions and deletions.


\subsection{Handling Clause Deletions} \label{sec_clause_del}

A clause $C \in \LCL_i$ in the current
ACNF $\psi_i' = \hat{Q}_i.\,(\phi_i \wedge \LCL_i \vee \LCU_i)$ might become non-derivable if its derivation depends
on clauses in $\phi^{\mathit{del}}_{i+1}$ which are deleted to
obtain the CNF part $\phi_{i+1} = (\phi_{i} \setminus
\phi^{\mathit{del}}_{i+1} ) \cup \phi^{\mathit{add}}_{i+1}$ of the
next PCNF $\psi_{i+1}$.

In \depqbf, learned clauses in $\LCL_i$ are deleted as follows. As pointed out
in the previous section, clauses of popped off frames are disabled by
assigning the respective frame selector variables to \emph{true}.  Since the
formula contains only positive literals of selector
variables, these variables cannot be chosen as pivots in derivations. 
Therefore, learned clauses whose derivations depend on
disabled clauses of a popped off frame $f_j$ contain the selector variable
$s_j$ of $f_j$.  Hence these learned clauses are also disabled by the
assignment of $s_j$.  This
 approach to handling learned clauses is also
applied in incremental SAT solving~\cite{DBLP:journals/entcs/EenS03}. 

The disabled clauses are physically deleted in a garbage collection phase if
their number exceeds a certain threshold. Variables which no longer occur in
the CNF part of the current PCNF are removed from the quantifier prefix and,
by Proposition~\ref{prop_cleaned_up_cubes_sound}, from learned cubes in
$\LCU_i$ to produce cleaned up cubes. We initialize the set $\LCU_{i+1}$ of
learned cubes in the ACNF $\psi_{i+1}' = \hat{Q}_{i+1}.\,(\phi_{i+1} \wedge \LCL_{i+1} \vee
\LCU_{i+1})$ of the next PCNF $\psi_{i+1}$ to be solved to contain the cleaned
up cubes.

The deletion of learned clauses based on selector variables is not optimal in
the sense of Definition~\ref{def_constraint_correct}. There might be another
derivation of a disabled learned clause $C$ which does not depend on the
deleted clauses $\phi^{\mathit{del}}_{i+1}$. This observation also applies to
the use of selector variables in incremental SAT solving.
 
As illustrated in the context of incremental SAT solving, the size of learned
clauses might increase considerably due to the additional selector
variables~\cite{DBLP:conf/sat/AudemardLS13,DBLP:conf/sat/LagniezB13}. In the
stack-based CNF representation of \depqbf, the clauses
associated to a frame $f_j$ all contain the selector variable
$s_j$ of $f_j$. Therefore, the maximum number of selector variables in a new
clause learned from the current PCNF $\psi_i$ is bounded by the number of
currently enabled frames. The sequence of \emph{push} operations introduces a
linear ordering $f_0 < f_1 < \ldots < f_k$ on the enabled frames $f_i$ and
their clauses in the CNF with respect to the point of time where that frames
and clauses have been added. In \depqbf, we implemented the following
optimization based on this temporal ordering. Let $C$ 
and $C'$ be clauses which are resolved in the
course of clause learning. Assume that $s_i \in C$ and $s_j \in C'$ are the
only selector variables of currently enabled frames $f_i$ and $f_j$ in $C$ and
$C'$.  Instead of computing the usual Q-resolvent $C'' := C \otimes C'$, we
compute $C'' := (C \otimes C') \setminus \{l \mid l = s_i \textnormal{ if
} f_i < f_j \textnormal{ and } l = s_j \textnormal{ otherwise}\}$. That is,
the selector variable of the frame which is smaller in the temporal ordering is
discarded from the resolvent. If $f_i < f_j$ then the clauses in $f_i$ were
pushed onto the clause stack before the clauses in $f_j$. The frame $f_j$ will
be popped off the stack before $f_i$. Therefore, in order to properly disable
the learned clause $C''$ after \emph{pop} operations, it is sufficient to keep
the selector variable $s_j$ of the frame $f_j$ in $C''$. With this
optimization, \emph{every} learned clause contains \emph{exactly one} selector
variable. In the SAT solver \picosat, an optimization which has similar
effects is implemented.


\subsection{Handling Clause Additions} \label{sec_clause_add}

Assume that the PCNF $\psi_i := \hat{Q}_i.\,\phi_i$ has been
solved and that all learned constraints in  the ACNF $\psi_i' = \hat{Q}_i.\,(\phi_i \wedge \LCL_i \vee
\LCU_i)$ of $\psi_i$ are derivable with respect to
$\psi_{i}$. The set $\phi^{\mathit{add}}_{i+1}$ of clauses is
added to $\phi_i$ to obtain the CNF part $\phi_{i+1} = (\phi_{i} \setminus
\phi^{\mathit{del}}_{i+1} ) \cup \phi^{\mathit{add}}_{i+1}$ of the next PCNF
$\psi_{i+1} = \hat{Q}_{i+1}.\,\phi_{i+1}$. For learned clauses,
 we can set $\LCL_{i+1} :=
\LCL_{i}$ in the ACNF $\psi_{i+1}' = \hat{Q}_{i+1}.\,(\phi_{i+1} \wedge
\LCL_{i+1} \vee \LCU_{i+1})$ of $\psi_{i+1}$. The following example
illustrates  the effects of adding $\phi^{\mathit{add}}_{i+1}$ on the cubes.

\begin{example} \label{ex_incorrect_cubes_clean_dag}
Consider the cube derivation shown in Fig.~\ref{fig_derivations}. As illustrated in
Example~\ref{ex_clause_add_incorrect_cube}, the cubes $C_9 = (x_6 \wedge x_2
\wedge \neg y_8 \wedge \neg x_5 \wedge x_4)$ and $C_{10} = (\neg y_8)$ are
non-derivable with respect to the PCNF $\psi_2$ obtained from $\psi$ by adding the
clause $C_0 := (\neg x_2 \vee \neg x_4)$. The initial cube $C_{11} := (y_8
\wedge \neg x_4 \wedge \neg x_1 \wedge x_5 \wedge x_6 \wedge x_2)$ still is
derivable because the underlying model $A_2 := \{y_8, \neg x_4, \neg x_1, x_5, x_6, x_2\}$
of $\psi$ is also a model of
$\psi_2$. Therefore, when solving $\psi_2$ we can keep the derivable cubes $C_{11}$ and
$C_{12} = \ER{C_{11}}$.  
The non-derivable cubes $C_9$ and $C_{10}$ must be discarded. Otherwise,
QCDCL might produce the cube resolution proof shown in Fig.~\ref{fig_derivations}
when solving the \emph{unsatisfiable} PCNF $\psi_2$, which is incorrect.
\end{example}

We sketch an approach to identify the cubes in a cube derivation DAG $G$ which are
non-derivable with respect to the next PCNF $\psi_{i+1} =
\hat{Q}_{i+1}.\,\phi_{i+1}$. 
Starting at the initial cubes, $G$ is traversed in a topological order. An
initial cube $C$ is marked as derivable if $\psi_{i+1}[C] = \top$, otherwise if
$\psi_{i+1}[C] \not = \top$ then $C$ is marked as non-derivable. This test can be
carried out syntactically by checking whether every clause of $\psi_{i+1}$ is
satisfied under the assignment given by $C$. A cube $C$
obtained by existential reduction or cube resolution is marked as derivable if
all its predecessors in $G$ are marked as derivable. Otherwise, $C$ is marked as
non-derivable. Finally, all cubes in $G$ marked as non-derivable are
deleted.

The above procedure allows to find a subset $\LCU_{i+1} \subseteq
\LCU_{i}$ of the set $\LCU_{i}$ of cubes in the solved ACNF $\psi_i' =
\hat{Q}_i.\,(\phi_i \wedge \LCL_i \vee \LCU_i)$ so that
 all cubes in $\LCU_{i+1}$ are derivable and
Proposition~\ref{prop_learning_satequiv} holds for the next ACNF
$\psi_{i+1}' = \hat{Q}_{i+1}.\,(\phi_{i+1} \wedge \LCL_{i+1} \vee
\LCU_{i+1})$. However, this procedure is not optimal because it might mark a
cube $C \in G$ as non-derivable with respect to the next PCNF $\psi_{i+1}$
although $\psi_{i+1} \vdash C$.

\begin{example} \label{ex_cube_check_overapprox} Given the satisfiable PCNF
$\psi := \exists x_1 \forall y_8 \exists x_5,x_2,x_6,x_4.\,\phi$, where $\phi
:= \bigwedge_{i := 1,\ldots, 5} C_i$ with the clauses $C_i$ from
Example~\ref{ex_running} where $C_{1} := (y_8 \vee \neg x_5)$, $C_{2} := (x_2
\vee \neg x_6)$, $C_{3} := (\neg x_1 \vee x_4)$, $C_{4} := (\neg y_8 \vee \neg
x_4)$, $C_{5} := (x_1 \vee x_6)$.  Consider the model $A_3 := \{\neg x_1, y_8,
\neg x_5, x_2, x_6, \neg x_4\}$ of $\psi$ and the initial cube $C_{15} :=
(\neg x_1 \wedge y_8 \wedge \neg x_5 \wedge x_2 \wedge x_6 \wedge \neg x_4)$ generated
from $A_3$. Existential reduction of $C_{15}$ produces the cube $C_{16} :=
\ER{C_{15}} = (\neg x_1 \wedge y_8)$.  Assume that the clause $C_0 := (x_4 \vee
x_5)$ is added to $\psi$ to obtain the PCNF $\psi_3$. The initial cube
$C_{15}$ is non-derivable with respect to $\psi_3$ since $C_{0}[A_3] \not =
\top$. However, for the cube $C_{16}$ derived from $C_{15}$ it holds that
$\psi_3 \vdash C_{16}$. The assignment $A_4 := \{\neg x_1, y_8, x_5, x_2, x_6,
\neg x_4\}$ is a model
of $\psi_3$. Let $C_{17} := (\neg x_1 \wedge y_8 \wedge x_5 \wedge x_2 \wedge x_6 \wedge
\neg x_4)$ be the initial cube generated from $A_4$. Then $C_{16} =
\ER{C_{17}}$ is derivable with respect to $\psi_3$.
\end{example}

In practice, QCDCL-based solvers typically store only the learned cubes, which
might be a small part of the derivation DAG $G$, and no edges. Therefore, checking
the cubes in a traversal of $G$ is not feasible. Even if the full DAG $G$ is
available, the checking procedure is not optimal as pointed out in
Example~\ref{ex_cube_check_overapprox}. Furthermore, it cannot be
used to check cubes which have become non-derivable after cleaning up by
Proposition~\ref{prop_cleaned_up_cubes_sound}. Hence, it is desirable to have an approach to checking 
the derivability of \emph{individual} learned cubes which is independent from the derivation
DAG $G$. To this end, we need a condition which is sufficient to conclude that
some \emph{arbitrary} cube $C$ is derivable with respect to a PCNF $\psi$, i.e.~to check
whether $\psi \vdash C$. However, we are not aware of such a condition.

As an alternative to keeping the full derivation DAG in memory, a \emph{fresh}
selector variable can be added to \emph{each} newly learned initial
cube. Similar to selector variables in clauses, these variables are
transferred to all derived cubes.  Potentially non-derivable  cubes are
then disabled by assigning the selector variables accordingly.  
However, different from clauses, it must
be checked \emph{explicitly} which initial cubes are non-derivable by checking
the condition in Definition~\ref{def_model} for all initial cubes in the set
$\LCU_i$ of learned cubes. This amounts to an asymmetric treatment of selector
variables in clauses and cubes. Clauses are added to and removed from the CNF
part by \emph{push} and \emph{pop} operations provided by the solver API. This
way, it is known precisely which clauses are removed. In contrast to that,
cubes are added to the set of learned cubes $\LCU_i$ on the fly during cube
learning. Moreover, the optimization based on the temporal ordering of
selector variables from the previous section is not applicable to generate
shorter cubes since cubes are not associated to stack frames. 

Due to the complications illustrated above, we implemented the following
simple approach in \depqbf to keep only initial 
cubes. Every initial cube computed by the solver is stored in
a linked list $L$ of bounded capacity, which is increased
dynamically. The list $L$ is separate from the set of learned
clauses. 
Assume that a set $\phi^{\mathit{add}}_{i+1}$ of clauses is added to
the CNF part $\phi_i$ of the current PCNF to obtain the CNF part
$\phi_{i+1} = (\phi_{i} \setminus \phi^{\mathit{del}}_{i+1} ) \cup
\phi^{\mathit{add}}_{i+1}$ of the next PCNF $\psi_{i+1} =
\hat{Q}_{i+1}.\,\phi_{i+1}$. All the cubes in the current set $\LCU_i$
of learned cubes are discarded. For every added clause $C \in
\phi^{\mathit{add}}_{i+1}$ and for every  initial cube $C'
\in L$, it is checked whether the assignment $A$ given by 
$C'$ is a model
of the next PCNF $\psi_{i+1}$.  Initial cubes $C'$
for which this check succeeds are added to the set $\LCU_{i+1}$ of
learned cubes in the ACNF $\psi_{i+1}'$ of the next PCNF
$\psi_{i+1}$ after existential reduction has been applied to them. If the
check fails, then $C'$ is removed from $L$. It suffices to check the
initial cubes in $L$ only
with respect to the clauses $C \in
\phi^{\mathit{add}}_{i+1}$, and not the full CNF part $\phi_{i+1}$,
since the assignments given by the cubes in $L$ are models of the
\emph{current} PCNF $\psi_{i}$. In the end,
the set $\LCU_{i+1}$ contains only initial cubes all of which are derivable with respect to the ACNF
$\psi_{i+1}'$. If clauses are removed from the formula, then by
Proposition~\ref{prop_cleaned_up_cubes_sound} variables which
do not occur anymore in the formula are removed from the initial
cubes in $L$.

In the incremental QBF-based approach to BMC for partial 
designs~\cite{DBLP:conf/sat/MarinMB12,DBLP:conf/date/MarinMLB12}, all cubes
are kept across different solver calls under the restriction that the
quantifier prefix is modified only at the left end. This
restriction does not apply to incremental solving of PCNF where the formula can
be modified arbitrarily.


\subsection{Incremental QBF Solver API} \label{sec_inc_api}

The API of \depqbf~\cite{LonsingEglyICMS14} provides
functions to manipulate the prefix and the CNF part of the current
PCNF. Clauses are added and removed by the \emph{push} and \emph{pop}
operations described in Section~\ref{sec_cnf_stack}.  New quantifier blocks
can be added at any position in the quantifier prefix. New variables can be
added to any quantifier block. Variables which no longer occur in the formula
and empty quantifier blocks can be explicitly deleted. The quantifier block
$B_0$ containing the frame selector variables is invisible to the user.  
The solver maintains the learned constraints as described in Sections~\ref{sec_clause_del}
and~\ref{sec_clause_add} without any user interaction.

The \emph{push} and \emph{pop} operations are a feature of
\depqbf. Additionally, the API supports the manual insertion of selector
variables into the clauses by the user. Similar to incremental SAT
solving~\cite{DBLP:journals/entcs/EenS03}, clauses can then be enabled and
disabled manually by assigning the selector variables as assumptions via the
API. In this case, these variables are part of the QBF encoding and the
optimization based on the frame ordering presented in
Section~\ref{sec_clause_del} is not applicable. After a PCNF has
been found unsatisfiable (satisfiable) under assumptions where
the leftmost quantifier block is existential (universal),
the set of relevant assumptions which were used by the solver
to determine the result can be extracted.\footnote{This is similar to the function ``analyzeFinal'' in
\minisat, for example.}


\section{Experimental Results}

\setlength{\tabcolsep}{3pt}

\begin{table}[t]
\begin{minipage}{0.5\textwidth} 
\centering
\begin{tabular}{|lccc|}
\hline
\multicolumn{4}{|c|}{\textbf{QBFEVAL'12-SR}} \\
 & \emph{discard LC} & \emph{keep LC} & \emph{diff.(\%)} \\ 
$\overline{a}$: & $29.37 \times 10^6$  & $26.18 \times 10^6$  & -10.88 \\ 
$\tilde{a}$: & 3,833,077  & 2,819,492  & -26.44 \\ 
\hline
$\overline{b}$: & 139,036  & 116,792 & -16.00 \\ 
$\tilde{b}$: & 8,243  & 6,360  & -22.84 \\ 
\hline
$\overline{t}$: & 99.03  & 90.90 & -8.19 \\ 
$\tilde{t}$: & 28.56 & 15.74 & -44.88 \\ 
\hline
\end{tabular}
\end{minipage}
\begin{minipage}{0.5\textwidth}
\centering
\begin{tabular}{|lccc|}
\hline
\multicolumn{4}{|c|}{\textbf{QBFEVAL'12-SR-Bloqqer}} \\
 & \emph{discard LC} & \emph{keep LC} & \emph{diff.(\%)} \\ 
$\overline{a}$: & $39.75 \times 10^6$  & $34.03 \times 10^6$  & -14.40 \\ 
$\tilde{a}$: & $1.71 \times 10^6$  &  $1.65 \times 10^6$ & -3.62 \\ 
\hline
$\overline{b}$: & 117,019 & 91,737 & -21.61 \\ 
$\tilde{b}$: & 10,322 & 8,959 & -13.19 \\ 
\hline
$\overline{t}$: & 100.15 & 95.36 & -4.64 \\ 
$\tilde{t}$: & 4.18 & 2.83 & -32.29 \\ 
\hline
\end{tabular}
\end{minipage}
\caption{Average and median number of assignments ($\overline{a}$ and
$\tilde{a}$, respectively), backtracks ($\overline{b},\tilde{b}$), and wall
clock time ($\overline{t},\tilde{t}$) in seconds on \emph{sequences} $S = \psi_0, \ldots,
\psi_{10}$ of PCNFs which were fully solved by \depqbf both if all learned
constraints are discarded (\emph{discard LC}) and if constraints which are
correct in the sense of Propositions~\ref{prop_learning_satequiv} and~\ref{prop_cleaned_up_cubes_sound} are kept
(\emph{keep LC}). Clauses are \emph{added} to $\psi_i$ to obtain $\psi_{i+1}$ in $S$.}
\label{tab_experiments_eval2012}
\end{table}
\begin{table}[t]
\begin{minipage}{0.5\textwidth} 
\centering
\begin{tabular}{|lccc|}
\hline
\multicolumn{4}{|c|}{\textbf{QBFEVAL'12-SR}} \\
 & \emph{discard LC} & \emph{keep LC} & \emph{diff.(\%)} \\ 
$\overline{a}$: & $5.48 \times 10^6$  & $0.73 \times 10^6$  & -86.62 \\ 
$\tilde{a}$: & 186,237  &  15,031 & -91.92 \\ 
\hline
$\overline{b}$: & 36,826 & 1,228 & -96.67 \\ 
$\tilde{b}$: & 424 & 0 & -100.00 \\ 
\hline
$\overline{t}$: & 21.94 & 4.32 & -79.43 \\ 
$\tilde{t}$: & 0.75 & 0.43 & -42.66 \\ 
\hline
\end{tabular}
\end{minipage}
\begin{minipage}{0.5\textwidth}
\centering
\begin{tabular}{|lccc|}
\hline
\multicolumn{4}{|c|}{\textbf{QBFEVAL'12-SR-Bloqqer}} \\
 & \emph{discard LC} & \emph{keep LC} & \emph{diff.(\%)} \\ 
$\overline{a}$: & $5.88 \times 10^6$  & $1.29 \times 10^6$  & -77.94 \\ 
$\tilde{a}$: & 103,330  & 8,199  & -92.06 \\ 
\hline
$\overline{b}$: & 31,489 & 3,350 & -89.37 \\ 
$\tilde{b}$: & 827 & 5 & -99.39 \\ 
\hline
$\overline{t}$: & 30.29  & 9.78 & -67.40 \\ 
$\tilde{t}$: & 0.50 & 0.12 & -76.00 \\ 
\hline
\end{tabular}
\end{minipage}
\caption{Like Table~\ref{tab_experiments_eval2012} but for the reversed sequences
  $S' = \psi_{9}, \ldots, \psi_0$ of PCNFs after the original sequence $S =
  \psi_0, \ldots, \psi_9,\psi_{10}$ has been solved. Clauses are \emph{deleted} from $\psi_i$ to obtain $\psi_{i-1}$ in $S'$.}
\label{tab_experiments_eval2012_pop}
\end{table}

To demonstrate the basic feasibility of general incremental QBF solving, 
we evaluated our incremental QBF solver \depqbf based on the instances from
\emph{QBFEVAL'12 Second Round (SR)} with and without preprocessing by
\bloqqer.\footnote{\url{http://www.kr.tuwien.ac.at/events/qbfgallery2013/benchmarks/}.}
We disabled the sophisticated dependency analysis in terms of dependency
schemes in \depqbf and instead applied the linear ordering of the quantifier
prefix in the given PCNFs.  For experiments, we constructed a
sequence of related PCNFs for \emph{each} PCNF in the benchmark sets as
follows. Given a PCNF $\psi$, we divided the number of clauses in $\psi$ by 10
to obtain the size of a slice of clauses. The first PCNF $\psi_0$ in the
sequence contains the clauses of one slice. The clauses of that slice are
removed from $\psi$. The next PCNF $\psi_1$ is obtained from $\psi_0$ by
adding another slice of clauses, which is removed from $\psi$. The other PCNFs
in the sequence $S = \psi_0,\psi_1, \ldots, \psi_{10}$ are constructed similarly
so that finally the last PCNF $\psi_{10}$ contains all the clauses from the
original PCNF $\psi$. In our tests, we constructed each PCNF $\psi_i$ from the
previous one $\psi_{i-1}$ in the sequence by adding a slice of clauses to a
new frame after a \emph{push} operation. 
We ran \depqbf on the sequences of PCNFs constructed this way with a wall clock time limit of 1800 seconds and a
memory limit of 7 GB. 

Tables~\ref{tab_experiments_eval2012}
and~\ref{tab_experiments_eval2012_pop} show experimental
results\footnote{Experiments were run on AMD Opteron 6238, 2.6 GHz, 64-bit
Linux.} on sequences $S = \psi_0, \ldots, \psi_{10}$ of PCNFs and on the reversed
ones $S' = \psi_9, \ldots, \psi_0$, respectively. To generate $S'$, we first
solved the sequence $S$ and then started to discard
clauses by popping the frames from the clause stack of DepQBF via its API.  In one run 
(\emph{discard LC}), we always discarded all the constraints that were learned
from the previous PCNF $\psi_i$ so that the solver solves the next PCNF
$\psi_{i+1}$ ($\psi_{i-1}$ with respect to Table~\ref{tab_experiments_eval2012_pop})
starting with empty sets of learned clauses and cubes. In another run
(\emph{keep LC}), we kept learned constraints  
 as described in Sections~\ref{sec_clause_del} and \ref{sec_clause_add}. This way, 70 out of 345
total PCNF sequences were fully solved from the set \emph{QBFEVAL'12-SR} by
both runs, and 112 out of 276 total sequences were fully solved from the set
\emph{QBFEVAL'12-SR-Bloqqer}.

The numbers of assignments, backtracks, and wall clock time indicate that
keeping the learned constraints is beneficial in incremental QBF solving
despite the additional effort of checking the collected initial cubes.  In the
experiment reported in Table~\ref{tab_experiments_eval2012} clauses are always
added but never deleted to obtain the next PCNF in the sequence. Thereby,
across all incremental calls of the solver in the set \emph{QBFEVAL'12-SR} on
average 224 out of 364 (61\%) collected initial cubes were identified as
derivable and added as learned cubes. For the set \emph{QBFEVAL'12-SR-Bloqqer},
232 out of 1325 (17\%) were added.

Related to Table~\ref{tab_experiments_eval2012_pop}, clauses are always
removed but never added to obtain the next PCNF to be solved, which allows to
keep learned cubes based on Proposition~\ref{prop_cleaned_up_cubes_sound}. Across all incremental calls of the solver
in the set \emph{QBFEVAL'12-SR} on average 820 out of 1485 (55\%) learned
clauses were disabled and hence effectively discarded because their
Q-resolution derivation depended on removed clauses. For the set
\emph{QBFEVAL'12-SR-Bloqqer}, 704 out of 1399 (50\%) were disabled.


\section{Conclusion}

We presented a general approach to incremental QBF solving which integrates
ideas from incremental SAT solving and which can be implemented in any
QCDCL-based QBF solver. The API of our incremental QBF solver \depqbf provides
\emph{push} and \emph{pop} operations to add and remove clauses in a PCNF. This increases the usability of our
implementation. Our approach is application-independent and applicable to
arbitrary QBF encodings.

We illustrated the problem of keeping the learned constraints across different
calls of the solver. To improve cube learning in incremental QBF solving, it
might be beneficial to maintain (parts of) the cube derivation in memory. This
would allow to check the cubes more precisely than with the simple approach we
implemented. Moreover, the generation of proofs and certificates~\cite{DBLP:journals/fmsd/BalabanovJ12,DBLP:conf/ijcai/GoultiaevaGB11,DBLP:conf/sat/NiemetzPLSB12} is supported if the derivations are kept in memory rather than in a trace file.

Dual
reasoning~\cite{DBLP:conf/sat/GoultiaevaB13,DBLP:conf/date/GoultiaevaSB13,DBLP:conf/sat/KlieberSGC10,DBLP:conf/cp/Gelder13}
and the combination of preprocessing and certificate
extraction~\cite{DBLP:conf/lpar/JanotaGM13,DBLP:conf/sat/MarinMB12,DBLP:conf/date/SeidlK14}
are crucial for the performance and applicability of CNF-based QBF
solving. The combination of incremental solving with these techniques has the
potential to further advance the state of QBF solving.

Our experimental analysis demonstrates the feasibility of incremental QBF
solving in a general setting and motivates further applications, along with the study of BMC of partial
designs using incremental QBF solving~\cite{DBLP:conf/date/MarinMLB12}.   
Related experiments with conformant planning based on incremental solving 
by \depqbf showed promising results~\cite{DBLP:journals/corr/EglyKLP14}. 
Further experiments
with problems which are inherently incremental can provide more insights and
open new research directions.



\end{document}